\definecolor{lgreen} {RGB}{180,210,100}
\definecolor{ngreen} {RGB}{98,158,31}
\definecolor{dgreen} {RGB}{78,138,21}
\definecolor{MLOWLSgreen} {RGB}{0,140,130}
\definecolor{SDPpurple} {RGB}{191,0,191}
\definecolor{lred}   {RGB}{220,0,0}
\definecolor{nred}   {RGB}{224,0,0}
\definecolor{bred}   {RGB}{200,20,20}
\definecolor{nblue}  {RGB}{28,130,185}
\definecolor{jblue}  {RGB}{20,50,100}
\definecolor{bred}   {RGB}{200,20,20}
\definecolor{crimson} {RGB}{220,20,62}
\newcommand*\circled[1]{\tikz[baseline=(char.base)]{
		\node[circle,draw,color=crimson, opacity=0.9,inner sep=1pt] (char) {\footnotesize #1};}}
\newcommand*\circledsmall[1]{\tikz[baseline=(char.base)]{
		\node[circle,draw,color=crimson, opacity=0.9,inner sep=1pt] (char) {\scriptsize #1};}}
\newcommand*{\myfontb}{\fontfamily{lmr}\selectfont}
\newcommand {\myvec}[1] {{\mbox{\boldmath $#1$}}}
\newcommand {\mymat}[1]  {{\mbox{\boldmath $#1$}}}
\DeclareMathAlphabet      {\mathbfit}{OML}{cmm}{b}{it}
\newcommand {\U} {\mymat{U}}
\newcommand {\mLambda} {\mymat{\Lambda}}
\newcommand {\C} {\mymat{C}}
\newcommand {\D} {\mymat{D}}
\newcommand {\R} {\mymat{R}}
\newcommand {\hR} {\widehat{\R}}
\newcommand {\bR} {\mybar{\R}}
\newcommand {\I} {\mymat{I}}
\newcommand {\X} {\mymat{X}}
\newcommand {\ua} {\myvec{a}}
\newcommand {\ub} {\myvec{b}}
\newcommand {\up} {\myvec{p}}
\newcommand {\uxi} {\myvec{\xi}}
\newcommand {\uv} {\myvec{v}}
\newcommand {\us} {\myvec{s}}
\newcommand {\ur} {\myvec{r}}
\newcommand {\uy} {\myvec{y}}
\newcommand {\uz} {\myvec{z}}
\newcommand {\Rset} {\mathbb{R}}
\newcommand {\Cset} {\mathbb{C}}
\newcommand {\Zset} {\mathbb{Z}}
\newcommand {\Eset} {\mathbb{E}}
\newcommand {\tps} {\tiny \rm{T}}
\newcommand {\her} {\tiny \rm{H}}
\newcommand {\calq} {\mathcal{Q}}
\newcommand {\sgn} {\mbox{sign}}
\newcommand {\sine} {\mbox{sine}}
\newcommand {\bur} {\mybar{\ur}}
\newcommand {\buv} {\mybar{\uv}}
\newcommand {\bus} {\mybar{\us}}
\newcommand {\buy} {\mybar{\uy}}
\newcommand {\sbar} {\mybar{s}}
\newcommand {\hup} {\widehat{\up}}
\newcommand {\dpd} {\text{\tiny DPD}}
\newcommand {\qdpd} {\text{\tiny OB-DPD}}
\DeclareMathOperator*{\argmin}{argmin}
\DeclareMathOperator*{\argmax}{argmax}
\newcommand {\ud} {\myvec{d}}
\newcommand {\bD} {\mybar{\D}}
\newtheorem{proof}{Proof}
\newtheorem{corollary}{Corollary}
\newsavebox\myboxA
\newsavebox\myboxB
\newlength\mylenA
\newcommand*\mybar[2][0.75]{%
	\sbox{\myboxA}{$\m@th#2$}%
	\setbox\myboxB\null% Phantom box
	\ht\myboxB=\ht\myboxA%
	\dp\myboxB=\dp\myboxA%
	\wd\myboxB=#1\wd\myboxA% Scale phantom
	\sbox\myboxB{$\m@th\overline{\copy\myboxB}$}%  Overlined phantom
	\setlength\mylenA{\the\wd\myboxA}%   calc width diff
	\addtolength\mylenA{-\the\wd\myboxB}%
	\ifdim\wd\myboxB<\wd\myboxA%
	\rlap{\hskip 0.5\mylenA\usebox\myboxB}{\usebox\myboxA}%
	\else
	\hskip -0.5\mylenA\rlap{\usebox\myboxA}{\hskip 0.5\mylenA\usebox\myboxB}%
	\fi}
\title{One-Bit Direct Position Determination of Narrowband Gaussian Signals}
\name{Amir Weiss and Gregory W. Wornell}
\address{
\begin{tabular}{cc}
Department of Electrical Engineering and Computer Science\\
Massachusetts Institute of Technology\\
\{amirwei,gww\}@mit.edu
\end{tabular}
}
\begin{document}
\ninept
\maketitle
\setlength{\abovedisplayskip}{5pt}
\setlength{\belowdisplayskip}{5pt}

\begin{abstract}
	One of the main drawbacks of the well-known Direct Position Determination (DPD) method is the requirement that raw data be transferred to a common processor. It would therefore be of high practical value if DPD---or a modified version thereof---could be successfully applied to a coarsely quantized version of the raw data, thus alleviating the communication requirements between the different base stations. Motivated by the above, and inspired by recent work in the rejuvenated one-bit array processing field, we present One-Bit DPD: a direct localization method based on one-bit quantized measurements. We show that despite the coarse quantization, the proposed method nonetheless yields a position estimate with appealing asymptotic properties. We further establish the identifiability conditions of this model, which rely only on second-order statistics. Simulation results corroborate our analytical derivations, demonstrating that much of the information regarding the emitter position is preserved under this crude form of quantization.
	%
	%As the demand for more efficient data acquisition constantly increases, one-bit quantization has recently become an attractive option serving this purpose in cutting edge applications. Consequently, the rejuvenated one-bit array processing field is now receiving more attention, as ``classical" array processing techniques are adapted accordingly. In the context of emitter localization, one of the main drawbacks of the well-known Direct Position Determination (DPD) method is the requirement that raw signal data be transferred to a common processor. It would therefore be of high practical value if DPD---or a modified version thereof---could be successfully applied to the one-bit quantized version of the raw data, thus alleviating the requirements on communication links between different base stations. Motivated by the above, we present One-Bit DPD, a method for one-bit quantized measurements. We show that despite the coarse quantization, the proposed method nonetheless yields an accurate estimate for the unknown emitter position. We further establish the underlying identifiability conditions of this model. Our analytical results are corroborated by empirical simulation results, demonstrating that much of the information on the emitter position is preserved under this crude form of quantization.
\end{abstract}

\begin{keywords}
Array processing, direct position determination, emitter localization, beamforming, one-bit quantization.
\end{keywords}
\vspace{-0.3cm}
\section{Introduction}\label{sec:intro}
\vspace{-0.05cm}
Emitter localization is one of the most attractive problems in signal processing, relevant to a variety of applications, such as autonomous vehicles \cite{caiti2005localization,ward2016vehicle}, radar, lidar and sonar systems \cite{becker1999passive,wolcott2017robust,liu2016robotic}, and battlefield surveillance \cite{wan2015distributed}, to name but a few.

Naturally, this problem has been extensively addressed in the literature and different methods have been proposed over the years. Common two-step methods are based on first estimating the signal's Time-Of-Arrival (TOA) and / or Angle-Of-Arrival (AOA) at several base stations, and based on these estimates, the emitter position is subsequently estimated. Other methods, operating by the same principle, incorporate Frequency Difference-Of-Arrival (FDOA) estimation as well \cite{ho2004accurate,ho2007source}, when possible. Since these TOA / AOA / FDOA are computed separately at each base station, they do not account for the informative constraint that all measurements from all the different base stations contain the {\myfontb\emph{same}} signal, transmitted from the underlying emitter at the unknown location.

In order to further exploit the ``extra" information within this constraint, the one-step Direct Position Determination (DPD) method was proposed by Weiss in his seminal work \cite{weiss2004direct}, where DPD of a single emitter is addressed. Following, many variants and extensions have been proposed, e.g., \cite{weiss2005direct,miljko2008direct,oispuu2010direct,tirer2015high,keskin2017direct,lu2017novel,zhang2018direct,ma2020distributed}. 

Although DPD provides higher accuracy than two-step-based alternatives, it requires the transmission of raw signal data from all base stations to a central computing unit. This entails a trade-off between localization accuracy and communication resources, which is at the heart of this paper. Our goal is thus to take one step towards a better balance with respect to (w.r.t.) this trade-off. In other words, for a prespecified desired localization accuracy, we aim to ``strip" significant redundancies from the transmitted raw data.

Inspired by recent work from the flourishing field of one-bit processing (e.g., \cite{choi2016near,gao2017gridless,ren2017one,liu2017one,ameri2019one,huang2019one,cheng2019target,sedighi2020one}), our main contribution in this letter is One-Bit DPD (OB-DPD)---a method allowing for accurate direct localization based on quantized measurements of Gaussian signals, acquired by a $1$-bit Analog-to-Digital Converter (ADC) low-complexity receiver. In addition, we establish the respective identifiability conditions for the quantized signal model under consideration, based only on Second-Order Statistics (SOS).

\vspace{-0.35cm}
\section{Problem Formulation}\label{sec:problemformulation}
\vspace{-0.1cm}
Consider $L$ spatially diverse base stations, each equipped with an $M$-element calibrated omni-directional antenna array, and the presence of an unknown narrowband signal, emitted from a transmitter whose deterministic, unknown position is denoted by the vector of coordinates\footnote{$K_p\hspace{-0.05cm}=2$ for the $2$-dimensional case; $K_p=3$ for the $3$-dimensional case.} $\up\in\Rset^{K_p\times 1}$. We assume that the transmitter is sufficiently far from all $L$ base stations to allow a planar wavefront (``far-field") approximation. The time-varying vector of sampled, baseband-converted signals from all $M$ sensors at the $\ell$-th base station array is given by
\begin{equation}\label{modelequation}
	\ur_{\ell}[n]=b_{\ell}\ua_{\ell}(\up)s_{\ell}[n]+\uv_{\ell}[n]\in\Cset^{M\times 1},
\end{equation}
for all time-index $n\in\{1,\ldots,N\}$, where
\begin{enumerate}[i]
	\itemsep0.025em 
	\item $b_{\ell}\in\Cset$ denotes the unknown channel effect (e.g., attenuation) from the transmitter to the $\ell$-th base station;
	\item $\ua_{\ell}(\up)\in\Cset^{M\times 1}$ denotes the known $\ell$-th array response to a signal transmitted from position $\up$; 
	\item $s_{\ell}[n]\triangleq \left.s\left(t-\tau_{\ell}(\up)\right)\right\vert_{t=nT_s}\in\Cset$ denotes the unobservable sampled signal waveform at the $\ell$-th base station, where $s\left(t-\tau_{\ell}(\up)\right)$ is the analog, continuous-time waveform delayed by $\tau_{\ell}(\up)$, and $T_s$ is the sampling period; and
	\item $\uv_{\ell}[n]\in\Cset^{M\times1}$ denotes the additive noise at the $\ell$-th base station, representing internal (e.g., thermal) receiver noise and ``interfering" signals, modeled as spatially and temporally independent, identically distributed (i.i.d.) zero-mean circular Complex Normal (CN, \cite{loesch2013cramer}) vector process with a covariance matrix $\R_{v_{\ell}}\triangleq\Eset\left[\uv_{\ell}[n]\uv^{\her}_{\ell}[n]\right]=\sigma_{v_{\ell}}^2\I_M\in\Cset^{M\times M}$, where $\sigma_{v_{\ell}}^2$ is unknown, and $\I_M$ denotes the $M$-dimensional identity matrix.
\end{enumerate}

We also assume that the transmitter and antenna arrays are stationary during the whole observation interval, such that no frequency shift due to the Doppler effect occurs. Further, we assume that $s(t)$ may be modeled as a stationary (not necessarily i.i.d.) zero-mean circular CN random process, statistically independent of all $\{\uv_{\ell}[n]\}_{\ell=1}^L$, with an unknown auto-correlation function $R_s(t_0)\triangleq\Eset\left[s(t)s^*(t-t_0)\right]\in\Cset$. Consequently, we get for all $i,j\in\{1,\ldots,L\}$,
\begin{equation*}\label{crosscorrelationofsignal}
	%\begin{gathered}
	R^{(i,j)}_s[m]\triangleq\Eset\left[s_{i}[n]s_{j}^*[n-m]\right]=R_s\left(mT_s-\Delta_{ij}(\up)\right),%\\
	%\forall i,j\in\{1,\ldots,L\},
	%\end{gathered}
\end{equation*}
where $\Delta_{ij}(\up)\triangleq\tau_{i}(\up)-\tau_{j}(\up)\in\Rset$. Furthermore, it follows that $\{\ur_{\ell}[n]\}_{\ell=1}^L$ are all stationary zero-mean circular CN with auto- and cross-covariance matrices (for all $i,j\in\{1,\ldots,L\}$)
\begin{equation*}\label{autoandcorrscovariance}
	\begin{gathered}
		\R^{(i,j)}[m]\triangleq\Eset\left[\ur_{i}[n]\ur^{\her}_{j}[n-m]\right]=\\
		b_{i}b^*_{j}R^{(i,j)}_s[m]\ua_{i}(\up)\ua^{\her}_{j}(\up)+\delta_{ij}\sigma^2_{v_{i}}\I_M\in\Cset^{M\times M},
	\end{gathered}
\end{equation*}
where $\delta_{mn}$ denotes the Kronecker delta of $m,n\in\Zset$. Lastly, applying the (normalized) Discrete Fourier Transform (DFT) to \eqref{modelequation} yields the equivalent frequency-domain representation,
\begin{equation*}\label{modelequationfreq}
	\begin{gathered}
		\bur_{\ell}[k]=b_{\ell}\ua_{\ell}(\up)\sbar[k]e^{-\jmath\omega_k\tau_{\ell}({\text{\boldmath $p$}})}+\buv_{\ell}[k]\in\Cset^{M\times 1},\\
		\omega_k\triangleq\frac{2\pi(k-1)}{NT_s}\in\Rset^+, \quad \forall k\in\{1,\ldots,N\},
	\end{gathered}
\end{equation*}
where we use $\mybar{z}[k]$ to denote the $k$-th DFT coefficient of the corresponding (discrete) time-domain signal $z[n]$ everywhere.

In this work, rather than assuming access to the discrete-time signal \eqref{modelequation} measured by an ideal $\infty$-bit ADC, we assume access only to a ``coarse" quantized version thereof, obtained by a $1$-bit ADC. Specifically, the vector of $1$-bit quantized signals from the $\ell$-th base station at time $n$ is given by
\begin{equation}\label{onebitmeasuredsignal}
	\uy_{\ell}[n]\triangleq\calq\big(\ur_{\ell}[n]\big)\in\left\{e^{\jmath\left(\frac{\pi}{4}+\frac{\pi m}{2}\right)}: 0\leq m\leq3 \right\}^{M\times 1},
\end{equation}
where the complex-valued $1$-bit quantizer is defined as
\begin{equation}\label{defofQ}
	\mathcal{Q}(z)\triangleq\frac{1}{\sqrt{2}}\cdot\Big[\sgn\left(\Re\{z\}\right)+\jmath\cdot\sgn\left(\Im\{z\}\right)\Big], \forall z\in\Cset,
\end{equation}
and, with slight abuse of notations, $\calq(\cdot)$ operates elementwise in \eqref{onebitmeasuredsignal}. For simplicity of the exposition in some parts of the derivation which follows, we further assume that all $\{\tau_{\ell}(\up)\}$ are (at least approximately) an integer multiple of the sampling period. However, this assumption is not required in practice, and can be relaxed. This completes the definition of our model, and the problem at hand may be stated concisely as follows:
\tcbset{colframe=gray!90!blue,size=small,width=0.49\textwidth,arc=2mm,outer arc=1mm}
\begin{tcolorbox}[upperbox=visible,colback=white]
	{\emph{Given the $1$-bit measurements $\{\uy_{\ell}[n]\}$ from all $L$ base stations, estimate the unknown transmitter position $\up$.}}
\end{tcolorbox}
\vspace{-0.5cm}
\section{DPD for Non-Quantized Measurements}\label{sec:DPDreview}
%\vspace{-0.1cm}
Since our proposed algorithm is both inspired by and closely related to the original DPD method \cite{weiss2004direct}, we first briefly review the DPD objective function, principal steps of the derivation, and the final closed-form expression for the estimate of $\up$.

The DPD method is in fact originated by Nonlinear Least-Squares (NLS) estimation of the transmitter's position $\up$ based on the data $\{\ur_{\ell}[n]\}$. Thus, it seeks the position estimate\footnote{For brevity, whenever it is clear from the context, we hereafter loosely use $\up$ for the true emitter position, and for a general position vector-variable.}
\begin{equation*}\label{originalDPDdefinition}
	\hup_{\dpd}\triangleq\underset{{\text{\boldmath $p$}}\in\Rset^{K_{p}\times 1}}{\argmin}\;Q(\up),
\end{equation*}
where the NLS cost function to be minimized is given by
\begin{equation}\label{originalDPDcostfunction}
	Q(\up)\triangleq \min_{\substack{\text{\boldmath$\bar{s}$}\in\Cset^{N\times1} \\ {\text{\boldmath$b$}}\in\mathcal{B}_L}} \sum_{\ell=1}^{L}\left\|\bur_{\ell}-b_{\ell}\cdot\bus_{\ell}\otimes\ua_{\ell}(\up)\right\|^2\in\Rset^+,
\end{equation}
$\|\cdot\|$ and $\otimes$ are the $\ell^2$-norm and Kronecker product, resp., $\mathcal{B}_L\triangleq\{\uz\in\Cset^{L\times1}:\|\uz\|=1\}$, $\bus\triangleq\left[\sbar[1] \cdots \sbar[N]\right]^{\tps}\in\Cset^{N\times 1}$, $\ub\triangleq\left[b_1 \cdots b_{L}\right]^{\tps}\in\Cset^{L\times 1}$, and we have further defined
\begin{align*}\label{notation1}
	&\bur_{\ell}\triangleq\left[\bur^{\tps}_{\ell}[1] \cdots \bur^{\tps}_{\ell}[N]\right]^{\tps}\in\Cset^{MN\times 1},\\
	&\bus_{\ell}\triangleq\left[\sbar[1]e^{-\jmath\omega_1\tau_{\ell}({\text{\boldmath $p$}})} \cdots \sbar[N]e^{-\jmath\omega_N\tau_{\ell}({\text{\boldmath $p$}})}\right]^{\tps}\in\Cset^{N\times 1},
\end{align*}
for all possible $\ell$. Note that since both $\bus$ and $\ub$ are unknown, assuming $\ub\in\mathcal{B}_L$ is without loss of generality (see, e.g., \cite{tirer2015high}).
%\begin{equation}\label{originalDPDcostfunction}
%Q(\up)\triangleq \sum_{\ell=1}^{L}\left\|\bur_{\ell}-b_{\ell}\cdot\bus_{\ell}\otimes\ua_{\ell}(\up)\right\|^2\in\Rset^+,
%\end{equation}
%$\|\cdot\|$ and $\otimes$ are the $\ell^2$-norm and Kronecker product, resp., $\bus\triangleq\left[\sbar[1] \cdots \sbar[N]\right]^{\tps}\in\Cset^{N\times 1}$, and we have further defined
%\begin{align*}\label{notation1}
%&\bur_{\ell}\triangleq\left[\bur^{\tps}_{\ell}[1] \cdots \bur^{\tps}_{\ell}[N]\right]^{\tps}\in\Cset^{MN\times 1},\\
%&\bus_{\ell}\triangleq\left[\sbar[1]e^{-\jmath\omega_1\tau_{\ell}({\text{\boldmath $p$}})} \cdots \sbar[N]e^{-\jmath\omega_N\tau_{\ell}({\text{\boldmath $p$}})}\right]^{\tps}\in\Cset^{N\times 1},
%\end{align*}
%for all possible $\ell$. Although we use the notation $Q(\up)$ as in \cite{weiss2004direct} for brevity, we emphasize that it is in fact treated as a function of all the unknowns involved: $\up$, the channel coefficients $\ub\triangleq\left[b_1 \cdots b_{L}\right]^{\tps}\in\Cset^{L\times 1}$, and the signal's DFT coefficients $\bus$.

As shown in \cite{weiss2004direct}, optimizing \eqref{originalDPDcostfunction} w.r.t.\ the channel parameters $\ub$, and then further optimizing w.r.t.\ the signal's DFT coefficients $\bus$, yields the compact and elegant from,
\begin{equation}\label{compatformofDPD}
	\hup_{\dpd}=\underset{{\text{\boldmath $p$}}\in\Rset^{K_{p}\times 1}}{\argmax}\;\lambda_{\text{max}}\left(\D(\up)\right).
\end{equation}
Here, $\lambda_{\text{max}}(\X)$ denotes the largest eigenvalue of its square-matrix argument $\X$, and the matrix $\D(\up)$ is defined as
\begin{equation}\label{Ddef}
	\D(\up)\triangleq\U^{\her}(\up)\U(\up)\in\Cset^{L\times L},
\end{equation}
where $\U(\up)\triangleq\left[\ud_1 \cdots \ud_L\right]\in\Cset^{N\times L}$, and% (as in (12) in \cite{weiss2004direct})
\begin{align*}
	\ud_{\ell}&\triangleq\left[d_{\ell}[1] \cdots d_{\ell}[N]\right]^{\tps}\in\Cset^{N\times 1}, \;\forall \ell\in\{1,\ldots,L\},\\%\label{dvecdefinition}\\
	d_{\ell}[k]&\triangleq e^{-\jmath\omega_k\tau_{\ell}({\text{\boldmath $p$}})}\bur^{\her}_{\ell}[k]\ua_{\ell}(\up)\in\Cset, \;\forall k\in\{1,\ldots,N\}.%\label{delementdefinition}
\end{align*}
As pointed out, e.g., in \cite{tirer2015high}, when $\bus$ is considered deterministic unknown and $\sigma^2_{v_{\ell}}$ are all equal (and known), \eqref{compatformofDPD} is also the maximum likelihood estimate of $\up$. Nevertheless, it is always the NLS estimate of $\up$, {\myfontb\emph{regardless}} of the noise characteristics and / or the underlying nature of the unknown signal $s(t)$, be it deterministic or random. In particular, \eqref{compatformofDPD} is the NLS estimate within our random CN signal model as well. This observation will be used in Section \ref{sec:OneBitDPDsolution} in order to characterize some of the asymptotic properties of our proposed OB-DPD estimate. We note that a detailed analysis under this particularly interesting CN signal model is given in \cite{weiss2005direct}, Appendix C.
%When $s(t)$ is bandlimited, which is always true (at least approximately) in practice, it is completely determined by its samples acquired (at least) at its Nyquist rate. Therefore, for any higher sampling rate, the number of estimands remains fixed, hence \eqref{compatformofDPD} is a {\myfontb\emph{consistent}} estimate when the sample size increases due to an increasing sampling rate, under the mild condition that the noise is white. It follows from the above that \eqref{compatformofDPD} serves as a consistent estimate within our random CN signal model as well. This observation will be used in order to obtain the analytical guarantees of our proposed OB-DPD method in Section \ref{sec:OneBitDPDsolution}. 

Next, we provide an intuitive interpretation of the solution \eqref{compatformofDPD}, which is instrumental for the subsequent derivation of the OB-DPD algorithm for quantized $1$-bit measurements.
%\vspace{-1.1cm}
\subsection{Joint Beamforming: DPD as a Collection of Beamformers}\label{subsec:DPDbeamformer}
%\vspace{-0.35cm}
Careful inspection of the element $D_{ij}(\up)$ of \eqref{Ddef} reveals that\vspace{-0.1cm}%the $(i,j)$-th element of the matrix $\D$ in \eqref{Ddef} shows that
\begin{equation}\label{dataonlyviaestimate}
	\begin{aligned}
		D_{ij}(\up)&=\ud_i^{\her}\ud_j=\sum_{k=1}^{N}d_i^*[k]d_j[k]=\\
		&=\ua^{\her}_{i}(\up)\Bigg(\underbrace{\sum_{k=1}^{N}\bur_{i}[k]\bur^{\her}_{j}[k]e^{\jmath\omega_k\Delta_{ij}({\text{\boldmath $p$}})}}_{\triangleq N\cdot{\text{\boldmath $\widehat{R}$}^{(i,j)}}[\Delta_{ij}({\text{\boldmath $p$}})]\in\Cset^{M\times M}}\Bigg)\ua_{j}(\up)\\
		&=N\cdot\ua^{\her}_{i}(\up)\hR^{(i,j)}[\Delta_{ij}(\up)]\ua_{j}(\up).
	\end{aligned}
\end{equation}
Hence, $D_{ij}(\up)$ can be viewed as conventional (Bartlett, \cite{van2002optimum}) ``cross"-beamforming between the observed signals at the $i$-th and $j$-th base stations. This becomes even clearer when considering the ``large" sample size asymptotic regime, as\vspace{-0.2cm}
%\begin{equation*}\label{asymptoticconvergence}
%\begin{aligned}
%&\hspace{-0.05cm}\lim_{N\to\infty}\hR^{(j,i)}[\Delta_{ji}(\up)]\hspace{-0.05cm}=\hspace{-0.05cm}\lim_{N\to\infty}\frac{1}{N}\sum_{k=1}^{N}\bur_{j}[k]\bur^{\her}_{i}[k]e^{\jmath\omega_k\Delta_{ji}({\text{\boldmath $p$}})}\hspace{-0.05cm}\underset{\circledsmall{$1$}}{=}\\
%&\hspace{-0.05cm}\lim_{N\to\infty}\frac{1}{N}\sum_{n=1}^{N}\ur_{j}[n+\tau_j(\up)]\ur_{i}^{\her}[n+\tau_i(\up)]\hspace{-0.05cm}\underset{\circledsmall{$2$}}{=}\hspace{-0.05cm}\R^{(j,i)}[\Delta_{ji}(\up)],
%\end{aligned}
%\end{equation*}
\begin{equation*}\label{asymptoticconvergence}
	\begin{aligned}
		&\hR^{(i,j)}[\Delta_{ij}(\up)]=\frac{1}{N}\sum_{k=1}^{N}\bur_{i}[k]\bur^{\her}_{j}[k]e^{\jmath\omega_k\Delta_{ij}({\text{\boldmath $p$}})}\underset{\circledsmall{$1$}}{=}\\
		\vspace{-0.2cm}&\frac{1}{N}\sum_{\substack{n=1 \\ \,}}^{N}\ur_{i}[n+\tau_i(\up)]\ur_{j}^{\her}[n+\tau_j(\up)]\xrightarrow[\;\;\circledsmall{$2$}\;\;]{P}\R^{(i,j)}[\Delta_{ij}(\up)],
	\end{aligned}
\end{equation*}
where $\xrightarrow[\;\;\;\;]{P}$ denotes convergence in probability as $N\rightarrow\infty$ \cite{evans2004probability}, and we have used in \circled{$1$} Parseval's theorem\footnote{Neglecting edge effects due to the DFT circular shift property \cite{oppenheim2001discrete}.}; and in \circled{$2$} the consistency of the covariance estimate $\hR^{(i,j)}[\Delta_{ij}(\up)]$ \cite{de1998weak} (recall that we assume  $\{\tau_{\ell}(\up)\in (T_s\cdot\Zset)\}$, hence so are $\{\Delta_{ij}(\up)\}$).

Therefore, the matrix $\D(\up)$ can be viewed as a collection of all the beamformers created from all possible pairs of the $L$ base stations. Specifically, the diagonal element $D_{\ell\ell}(\up)$ is the auto-beamformer of the $\ell$-th base station based only on $\{\ur_{\ell}[n]\}$, and the off-diagonal element $D_{ij}(\up), i\neq j$, is the cross-beamformer of the $i$-th and $j$-th base stations based only on $\{\ur_{i}[n],\ur_{j}[n]\}$. In light of the above, \eqref{compatformofDPD} can be interpreted as the resulting estimate due to (implicit) ``weighted" joint beamforming.

Another important observation based on \eqref{dataonlyviaestimate} is that $D_{ij}(\up)$ depends on the measured data only via $\hR^{(i,j)}[\Delta_{ij}(\up)]$. Therefore, the set $\{\hR^{(i,j)}[\Delta_{ij}(\up)]:\up\in\Rset^{K_p \times 1}\}_{i,j=1}^L$ is sufficient for the computation of the DPD estimate $\widehat{\up}_{\dpd}$.
\vspace{-0.25cm}

\section{The Proposed Method: One-Bit DPD}\label{sec:OneBitDPDsolution}
%In our one-bit problem, only the quantized measurements $\{\uy_{\ell}[n]\}$ are available. However, the key observation specified in Subsection \ref{subsec:DPDbeamformer}, namely that the set $\{\hR^{(i,j)}[\Delta_{ij}(\up)]\}_{i,j=1}^L$ is sufficient for the construction of a DPD estimate, essentially means that it suffices to retrieve (/ estimate) {\myfontb\emph{only}} the SOS of the observed signals prior to quantization, rather than the actual signals themselves, $\{\ur_{\ell}[n]\}$.
%In our one-bit problem, only the quantized measurements $\{\uy_{\ell}[n]\}$ are available. However, the key observation specified in Subsection \ref{subsec:DPDbeamformer}, i.e., the set $\{\hR^{(i,j)}[\Delta_{ij}(\up)]\}_{i,j=1}^L$ is sufficient for the construction of a DPD estimate, essentially means that it suffices to retrieve {\myfontb\emph{only}} the SOS of the observed signals prior to quantization, rather than the signals themselves.
In our one-bit problem, only the quantized measurements $\{\uy_{\ell}[n]\}$ are available. However, the key observation specified above (at the end of Subsection \ref{subsec:DPDbeamformer}) essentially means that it suffices to retrieve {\myfontb\emph{only}} the SOS of the observed signals prior to quantization, rather than the signals themselves, $\{\ur_{\ell}[n]\}$.

Fortunately, this can be achieved using the rather simple, yet important extension of the {\emph{arcsine law} \cite{van1966spectrum,jacovitti1994estimation}, as follows.
\begin{corollary}\label{coro1}
	Let $\uxi=\left[\uxi_1^{\tps}\;\uxi_2^{\tps}\right]^{\tps}\in\Cset^{(K_1+K_2)\times1}$ be a zero-mean, circular CN vector with a covariance matrix
	\begin{equation*}
		\Eset\left[\uxi\uxi^{\her}\right]={\begin{bmatrix}
				\Eset\left[\uxi_1\uxi_1^{\her}\right]&\Eset\left[\uxi_1\uxi_2^{\her}\right]\\
				\vspace{-0.35cm}\\
				\Eset\left[\uxi_2\uxi_1^{\her}\right]&\Eset\left[\uxi_2\uxi_2^{\her}\right]\end{bmatrix}}\triangleq{\begin{bmatrix}
				\C_1&\C_{12}\\
				\C_{12}^{\her}&\C_2\end{bmatrix}},
	\end{equation*}
	where $\C_1\in\Cset^{K_1\times K_1}, \C_2\in\Cset^{K_2\times K_2}, \C_{12}\in\Cset^{K_1\times K_2}$. Then,
	\begin{equation}\label{arcsinecross}
		\Eset\left[\calq\left(\uxi_1\right)\calq\left(\uxi_2\right)^{\her}\right]\hspace{-0.05cm}=\hspace{-0.05cm}\frac{2}{\pi}\emph{\sine}^{-1}\left(\mLambda_1^{-\frac{1}{2}}\C_{12}\mLambda_2^{-\frac{1}{2}}\right)\hspace{-0.05cm}\in\Cset^{K_1\times K_2},
	\end{equation}
	where ${\emph{\sine}}^{-1}(z)\triangleq\sin^{-1}(\Re\{z\})+\jmath\cdot\sin^{-1}(\Im\{z\})$ operates elementwise, $\calq(\cdot)$ as in \eqref{defofQ}, and $\mLambda_1$ and $\mLambda_2$ are diagonal matrices with the diagonal elements of $\C_1$ and $\C_2$, resp., on their diagonal. 
\end{corollary}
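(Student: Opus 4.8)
The plan is to prove \eqref{arcsinecross} entrywise, reducing it to the classical (real-valued) arcsine law. Fixing the $(k,\ell)$ entry of $\Eset[\calq(\uxi_1)\calq(\uxi_2)^{\her}]$ amounts to considering a single pair of jointly circular CN scalars, say $u\triangleq(\uxi_1)_k$ and $w\triangleq(\uxi_2)_\ell$, with $\Eset[|u|^2]=\sigma_u^2=(\C_1)_{kk}$, $\Eset[|w|^2]=\sigma_w^2=(\C_2)_{\ell\ell}$, and $\Eset[uw^*]=c=(\C_{12})_{k\ell}$. Since $(\mLambda_1^{-1/2}\C_{12}\mLambda_2^{-1/2})_{k\ell}=c/(\sigma_u\sigma_w)$, it suffices to establish the scalar identity $\Eset[\calq(u)\calq(w)^*]=\tfrac{2}{\pi}\sine^{-1}(c/(\sigma_u\sigma_w))$, after which assembling over all $(k,\ell)$ and recalling that $\sine^{-1}(\cdot)$ and the diagonal normalization act elementwise yields the matrix statement.

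First I would expand the scalar product directly from the definition in \eqref{defofQ}. Writing $u=u_R+\jmath u_I$ and $w=w_R+\jmath w_I$, the product $\calq(u)\calq(w)^*$ splits into four real sign-products, with real part $\tfrac12[\sgn(u_R)\sgn(w_R)+\sgn(u_I)\sgn(w_I)]$ and imaginary part $\tfrac12[\sgn(u_I)\sgn(w_R)-\sgn(u_R)\sgn(w_I)]$. Taking expectations reduces the task to evaluating four real sign-correlations, to each of which I would apply the classical Van Vleck arcsine law \cite{van1966spectrum,jacovitti1994estimation}: for zero-mean jointly Gaussian reals $X,Y$ with correlation coefficient $\rho$, one has $\Eset[\sgn(X)\sgn(Y)]=\tfrac{2}{\pi}\sin^{-1}(\rho)$.

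The remaining ingredient, and the step requiring the most care, is extracting the four relevant correlation coefficients from the circularity (properness) assumption. Invoking $\Eset[\uxi\uxi^{\tps}]=\uo$ componentwise gives $\Eset[u^2]=\Eset[w^2]=0$ and $\Eset[uw]=0$; the first two force $\Eset[u_R^2]=\Eset[u_I^2]=\sigma_u^2/2$ (likewise for $w$) and $\Eset[u_Ru_I]=0$, while combining $\Eset[uw]=0$ with $\Eset[uw^*]=c$ pins down $\Eset[u_Rw_R]=\Eset[u_Iw_I]=\tfrac12\Re\{c\}$ and $\Eset[u_Iw_R]=-\Eset[u_Rw_I]=\tfrac12\Im\{c\}$. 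The corresponding correlation coefficients are thus $\Re\{c\}/(\sigma_u\sigma_w)$ for both the ``real$\times$real'' and ``imag$\times$imag'' terms, and $\pm\Im\{c\}/(\sigma_u\sigma_w)$ for the two cross terms. The main obstacle is precisely this bookkeeping: one must use properness to collapse the four a priori independent real cross-moments down to the two real degrees of freedom encoded by $c$, and track the signs so that the imaginary contributions reinforce rather than cancel.

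Finally, I would substitute these coefficients into the four arcsine expressions and collect. The two real-part terms add to $\tfrac{2}{\pi}\sin^{-1}(\Re\{c\}/(\sigma_u\sigma_w))$, while the sign flip $\Eset[u_Rw_I]=-\Eset[u_Iw_R]$ together with $\sin^{-1}(-x)=-\sin^{-1}(x)$ makes the two cross terms add to $\jmath\tfrac{2}{\pi}\sin^{-1}(\Im\{c\}/(\sigma_u\sigma_w))$. By the definition of $\sine^{-1}(\cdot)$ acting separately on real and imaginary parts, the sum equals $\tfrac{2}{\pi}\sine^{-1}(c/(\sigma_u\sigma_w))$, which is the desired entrywise identity and hence, upon assembly, \eqref{arcsinecross}.
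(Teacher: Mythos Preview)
Your proposal is correct and follows essentially the same approach as the paper: both argue entrywise, reducing \eqref{arcsinecross} to the scalar arcsine law applied to each pair $((\uxi_1)_k,(\uxi_2)_\ell)$. The only difference is granularity---the paper invokes the complex arcsine law for CN variables from \cite{jacovitti1994estimation} directly, whereas you unpack that result one level further by decomposing into real and imaginary parts, using properness to determine the four real cross-correlations, and applying the real Van Vleck law to each; this is a more self-contained derivation of the same identity, not a different route.
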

\begin{proof}
	Applying the arcsine law for CN variables \cite{jacovitti1994estimation} to each of the elements of $\Eset\left[\calq\left(\uxi_1\right)\calq\left(\uxi_2\right)^{\her}\right]$ yields \eqref{arcsinecross}.\hfil\hfil\hfil\hfil\hfil\hfil\hfil\hfil\hfil$\blacksquare$
\end{proof}
\vspace{0.05cm}It follows from Corollary \ref{coro1} that
\begin{equation*}\label{onebitautoandcorrscovariance}
	\Eset\left[\uy_{i}[n]\uy_{j}^{\her}[n-m]\right]=\frac{2}{\pi}\sine^{-1}\left(\bR^{(i,j)}[m]\right),
	%\tR^{(i,j)}[m]\triangleq\Eset\left[\uy_{i}[n]\uy_{j}^{\her}[n-m]\right]=\frac{2}{\pi}\sine\left(\bR^{(i,j)}[m]\right),
\end{equation*}
where 
\begin{equation}\label{normalizationdef}
	\bR^{(i,j)}[m]\triangleq\frac{\R^{(i,j)}[m]}{\rho_{i}\rho_{j}},\; \rho^2_{i}\triangleq\frac{|b_i|^2R_s(0)}{M}+\sigma^2_{v_{i}},
\end{equation}
for every $i,j\in\{1,\ldots,L\}$, recalling that the antennas are omni-directional, hence we assume without loss of generality $|a_{\ell,m}(\up)|^2=\tfrac{1}{M}$ for all $m\in\{1,\ldots,M\}$ and $\ell\in\{1,\ldots,L\}$.

Given the $1$-bit quantized signals $\{\uy_{\ell}[n]\}$ from all $L$ base stations, for every candidate $\up$, we may thus construct 
\begin{equation}\label{Rbarhat}
	\widehat{\bR}^{(i,j)}[\Delta_{ij}(\up)]\triangleq\sine\left(\frac{\pi}{2}\cdot\frac{1}{N}\sum_{k=1}^{N}\buy_{i}[k]\buy^{\her}_{j}[k]e^{\jmath\omega_k\Delta_{ij}({\text{\boldmath $p$}})}\right)
\end{equation}
for all pairs $(i,j)$, where $\sine(z)\triangleq\sin(\Re\{z\})+\jmath\cdot\sin(\Im\{z\})$. By virtue of the continuous mapping theorem \cite{mann1943stochastic}, the invertibility of $\sin^{-1}(\cdot)$ on the interval $[-1,1]$, and using \circled{$1$} and \circled{$2$} in the same manner, we conclude that \eqref{Rbarhat} is a consistent estimate of $\bR^{(i,j)}[\Delta_{ij}(\up)]$. Thus, after computing the estimated set $\{\widehat{\bR}^{(i,j)}[\Delta_{ij}(\up)]\}_{i,j=1}^L$, we may further construct (\textit{cf}. \eqref{dataonlyviaestimate})
\begin{equation}\label{Dtildelement}
	\mybar{D}_{ij}(\up)\triangleq N\cdot\ua^{\her}_{i}(\up)\widehat{\bR}^{(i,j)}[\Delta_{ij}(\up)]\ua_{j}(\up),
\end{equation}
for all $i,j\in\{1,\ldots,L\}$, which naturally leads to our proposed OB-DPD estimate, defined as
\begin{equation}\label{compatformofOneBitDPD}
	\hup_{\qdpd}\triangleq\underset{{\text{\boldmath $p$}}\in\Rset^{K_{p}\times 1}}{\argmax}\;\lambda_{\text{max}}\left(\bD(\up)\right).
\end{equation}
We emphasize that although \eqref{compatformofOneBitDPD} is similar in form to \eqref{compatformofDPD}, $\hup_{\qdpd}$ and $\hup_{\dpd}$ are {\myfontb\emph{not}} identical nonetheless. Indeed, the coarse one-bit quantization causes a complete loss of information of the amplitude dimension. However, it turns out that much of the information regarding the position of the transmitter is contained in the relative phases between all signals from all base stations. Hence, much of this information is preserved under $1$-bit quantization. In particular, although $\hup_{\qdpd}$ typically uses significantly less data (/ bits) than $\hup_{\dpd}$, it still enjoys the same asymptotic properties of a ``parallel" DPD estimate based on unquantized data, as shown next.
\vspace{-0.35cm}
\subsection{Asymptotic Properties of One-Bit DPD and Identifiability}\label{subsec:ConsistencyOneBitDPD}
Consider a hypothetical scenario, referred to as Scenario $\mathcal{H}$, which is completely identical to the one under consideration as described in Section \ref{sec:problemformulation}, and in which case $\rho^2_i=\rho^2_j:=\rho^2$ for all $i,j\in\{1,\ldots,L\}$. One (simple) example giving rise to such a scenario is when all $\{b_{\ell}\}$ are identical (e.g., when the base stations are located at equal distances from the transmitter), and all $\{\sigma^2_{v_{\ell}}\}$ are identical. Of course, this is only one possible example out of many. Thus, due to \eqref{normalizationdef}, in this scenario we have  
\begin{equation*}
	\bR^{(i,j)}[m]\triangleq\rho^{-2}\cdot\R^{(i,j)}[m], \; \forall i,j\in\{1,\ldots,L\}.
\end{equation*}
Therefore, in this case \eqref{Rbarhat} converges to the $\rho^{-2}$-scaled covariance matrix of the signals {\myfontb\emph{prior}} to quantization, namely,
\begin{equation*}\label{asymptoticconvergenceonebit}
	\widehat{\bR}^{(i,j)}[\Delta_{ij}(\up)]\xrightarrow[\quad\;\;]{P}\rho^{-2}\cdot\R^{(i,j)}[\Delta_{ij}(\up)].
\end{equation*}
Since multiplication of $\bD(\up)$ by a (positive) scalar is immaterial w.r.t.\ the estimation rule \eqref{compatformofOneBitDPD}---as the scalar multiplication applies equally to all the eigenvalues for any $\up$---it follows that
\begin{equation}\label{asymptoticconvergenceDonebit}
	\bD(\up)\xrightarrow[\quad\;\;]{P}\rho^{-2}\hspace{-0.025cm}\cdot\D(\up) \;\Longrightarrow\; \hup_{\qdpd}\xrightarrow[\quad\;\;]{P}\hup^{(\mathcal{H})}_{\dpd},
\end{equation}
where $\hup^{(\mathcal{H})}_{\dpd}$ denotes the asymptotic DPD estimate based on the unquantized data in Scenario $\mathcal{H}$. In particular, $\hup_{\qdpd}$ inherits some of the asymptotic properties of $\hup^{(\mathcal{H})}_{\dpd}$. For example, if $\hup^{(\mathcal{H})}_{\dpd}$ is consistent w.r.t.\ the SNR and / or sample size, then $\hup_{\qdpd}$ is consistent in the respective sense(s) as well.

Now, consider the general, true scenario, referred to as Scenario $\mathcal{T}$, in which $\{\rho^2_{\ell}\}$ are not necessarily all equal. In this case, $\hup_{\qdpd}$ of Scenario $\mathcal{T}$ would still converge to $\hup^{(\mathcal{H})}_{\dpd}$ of the respective Scenario $\mathcal{H}$, in which the {\myfontb\emph{true}} values of $\{b_{\ell}\}$ and $\{\sigma^2_{v_{\ell}}\}$ are effectively replaced by their respective scaled versions $\{\mybar{b}_{\ell}\triangleq b_{\ell}/\rho_{\ell}\}$ and $\{\mybar{\sigma}^2_{v_{\ell}}\triangleq \sigma^2_{v_{\ell}}/\rho^2_{\ell}\}$. Obviously, $\hup^{(\mathcal{H})}_{\dpd}$ and the DPD estimate of Scenario $\mathcal{T}$, denoted as $\hup^{(\mathcal{T})}_{\dpd}$, would generally differ, and would therefore generally posses different statistical properties. Consequently, according to \eqref{asymptoticconvergenceDonebit}, $\hup_{\qdpd}$ would generally differ from $\hup^{(\mathcal{T})}_{\dpd}$, and in turn they too would generally posses different statistical properties. Nevertheless, $\hup_{\qdpd}$ would still asymptotically converge to $\widehat{\up}^{(\mathcal{H})}_{\dpd}$, its ``parallel" DPD estimate in the quantization-free model, and inherit some of its asymptotic properties, as explained above. In conclusion, OB-DPD and DPD are generally different, as expected. Yet, there always exists a ``parallel" scenario, with the normalized $\{\mybar{b}_{\ell}\}$ and $\{\mybar{\sigma}^2_{v_{\ell}}\}$, and the {\myfontb\emph{same}} emitter position $\up$, in which OB-DPD and DPD asymptotically coincide. 

This outcome is in fact quite intuitive. Indeed, observe that $\sine^{-1}(\cdot)$ is invertible as long as both the real and imaginary parts of its complex-valued argument lie in $[-1,1]$. Moreover, observe that the Signal-to-Noise Ratios (SNRs), as they are expressed in the SOS, are preserved under $1$-bit quantization within the auto-covariance matrix of each base station, since
\begin{equation*}
	\text{SNR}_{\ell}\triangleq\frac{|b_{\ell}|^2R_s(0)}{\sigma^2_{v_{\ell}}}=\frac{|\mybar{b}_{\ell}|^2R_s(0)}{\mybar{\sigma}^2_{v_{\ell}}}, \; \forall \ell\in\{1,\ldots,L\}.
\end{equation*}
Hence, intuitively (and informally), while the {\myfontb\emph{intra}} base station {\myfontb\emph{separated}} SOS information is asymptotically preserved (i.e., for a sufficiently large number of bits), the {\myfontb\emph{inter}} base stations {\myfontb\emph{relative}} amplitude-related SOS information is lost. Thus, $\hup_{\dpd}$ may (implicitly) apply different ``weighting" to each beamformer $D_{ij}(\up)$ based on the unquantized measurements. Conversely, given only $1$-bit measurements from all base stations, $\hup_{\qdpd}$ cannot distinguish, and thus cannot exploit, any differences manifested within the lost amplitude dimension.

It is also interesting to mention that since $\hup_{\qdpd}$ asymptotically coincides with $\hup^{(\mathcal{H})}_{\dpd}$, it follows that the one-bit model in scenario $\mathcal{T}$ is identifiable if and only if the quantization-free model in scenario $\mathcal{H}$ is identifiable. Therefore, the identifiability conditions of our model can be easily deduced from the ``standard" (quantization-free) model (see \cite{weiss2005direct}, Appendix B).  %That is, as long as $\up$ is identifiable in the ``standard" (quantization-free) model, only with $\{\rho^2\}$ all equal, then $\up$ is identifiable (based {\myfontb\emph{only}} on SOS) in the respective one-bit quantization model.

To summarize, the OB-DPD algorithm is given as follows:
\tcbset{colframe=gray!90!blue,size=small,width=0.49\textwidth,halign=flush center,arc=2mm,outer arc=1mm}
\begin{tcolorbox}[upperbox=visible,colback=white,halign=left]
	\textbf{\underline{The Proposed Solution Algorithm: One-Bit DPD}}\\
	\textbf{Input}: $\left\{\{\uy_{\ell}[n]\}_{n=1}^N\right\}_{\ell=1}^L$, and a $K_p$-dimensional grid. \\ \textbf{Output}: The OB-DPD estimate, $\hup_{{\normalfont \qdpd}}$.
	\begin{enumerate}
			\item For every candidate $\up\in\Rset^{K_p\times 1}$\hspace{0.025cm}:
			\begin{enumerate}
				\item Compute $\{\widehat{\bR}^{(i,j)}[\Delta_{ij}(\up)]\}_{i,j=1}^L$ as in \eqref{Rbarhat};
				\item Compute $\{\mybar{D}_{ij}(\up)\}_{i,j=1}^L$ as in \eqref{Dtildelement};
				\item Compute $\lambda_{\normalfont \text{max}}\left(\bD(\up)\right)$ as in \eqref{compatformofOneBitDPD}.
			\end{enumerate}
			\item Return \eqref{compatformofOneBitDPD}, the maximum point on the grid: $\hup_{\normalfont \qdpd}$.
	\end{enumerate}
\end{tcolorbox}
%\vspace{-0.45cm}
%\vspace{-0.2cm}
\section{Simulation Results}\label{sec:simulationresults}
%\vspace{-0.2cm}
We consider model \eqref{modelequation}, with $L=4$ base stations, located at the corners of a $5\text{[Km]}\times5\text{[Km]}$ square centered about the origin, each equipped with a $M=4$ element uniform linear array. The emitter is located at $\up=[1\;0.5]^{\tps}\text{[Km]}$, transmitting an unknown narrowband signal with a bandwidth of $100\text{[kHz]}$ and a flat spectrum, whose complex envelope is circular CN. Following \cite{weiss2004direct,tirer2015high}, for all $L$ base stations, the channel path-loss attenuation coefficients $\{|b_{\ell}|\}$ were drawn independently from the Gaussian distribution $\mathcal{N}(1,0.1^2)$, the phase of the channel coefficients were drawn independently from the uniform distribution $U[-\pi,\pi)$, and $\{\sigma^2_{v_{\ell}}=\sigma^2_v\}_{\ell=1}^4$. All empirical results are based on $250$ independent trials.% The location estimation is based on $N$ DFT coefficients (with Nyquist sampling), namely $2N$ bits per antenna for OB-DPD. All empirical results are based on $100$ independent trials.

Fig.\ \ref{fig:heatmaps} presents the OB-DPD heat map of a typical realization with an SNR level of $0$[dB] and $N=16$. For comparison, we also present the respective heat map of the classical DPD \cite{weiss2004direct}, based on unquantized\footnote{``unquantized" in the sense of ``up to machine accuracy". In our case, $64$ bits, where the quantization errors are negligible w.r.t.\ the estimation errors.} measurements. As seen, although the heat maps are not identical, the peak corresponding to the OB-DPD estimate is still in close vicinity to the true location.% Fig.\ ?? shows a similar behavior for an SNR level of $10$[dB] and only $2^4=16$ samples per antenna (i.e., $32$ bits per antenna for OB-DPD).

Fig.\ \ref{fig:RMS} shows the Root Mean Square (RMS) miss distance vs.\ the SNR for a fixed $N=32$ (Fig.\ \ref{fig:RMS_vs_SNR}), and vs.\ the sample size for a fixed $\text{SNR}=0$[dB] (Fig.\ \ref{fig:RMS_vs_N}). Evidently, the accuracy degradation is relatively small, recalling that OB-DPD uses significantly less bits. However, and more importantly, Fig.\ \ref{fig:RMS_vs_N} demonstrates that even when subject to the coarse $1$-bit quantization, for a sufficiently large number of bits, a desired localization accuracy (within the theoretical limitation of the model) can be attained by our OB-DPD estimate.% Similar trends as in Fig.\ \ref{fig:RMS_vs_N} are obtained for different SNRs.
\begin{figure}[t!]
	\centering
	\includegraphics[width=0.48\textwidth]{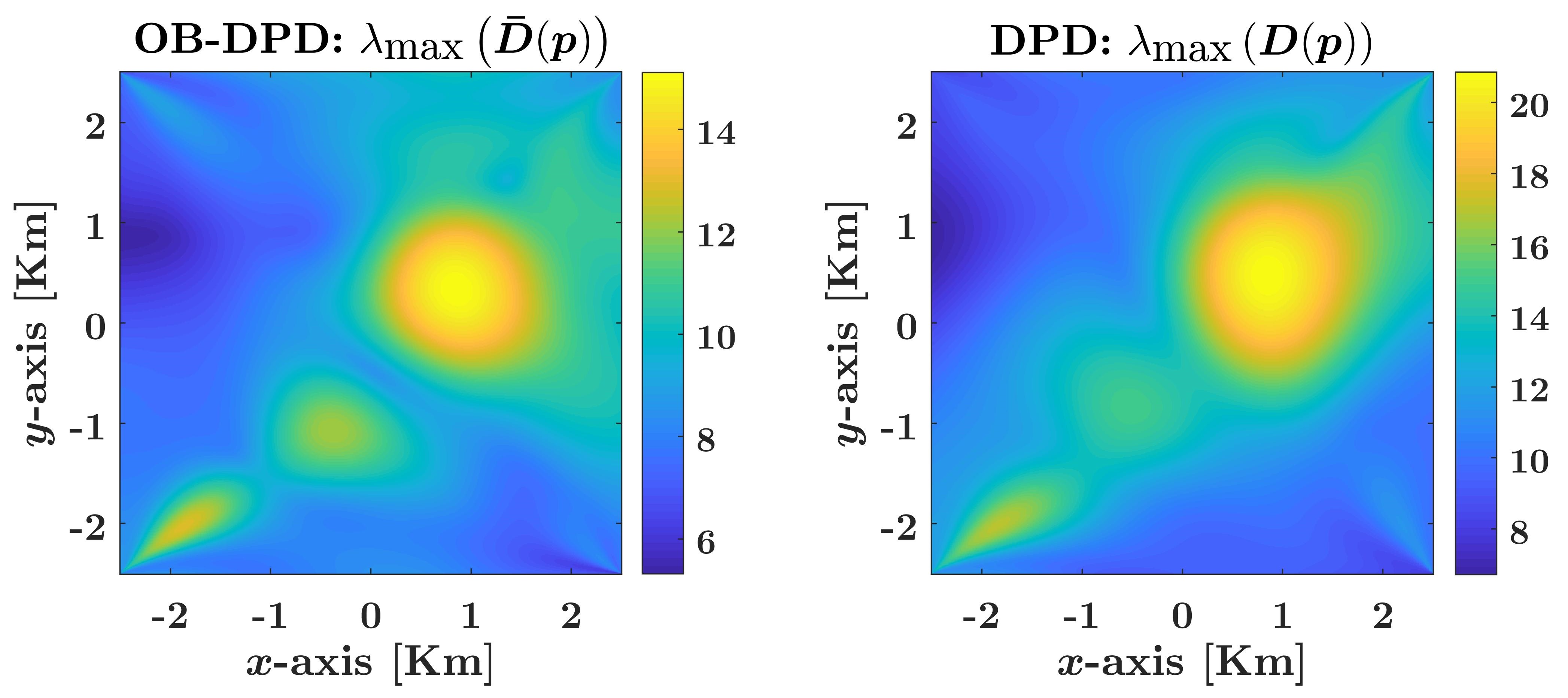}
	\caption{Heat maps of OB-DPD and DPD, $\text{SNR}=0$[dB], $N=16$.}\vspace{-0.3cm}
	\label{fig:heatmaps}%\vspace{-0.1cm}
\end{figure}
\begin{figure}[t!]
	\centering
	\begin{subfigure}[b]{0.23\textwidth}
		\includegraphics[width=\textwidth]{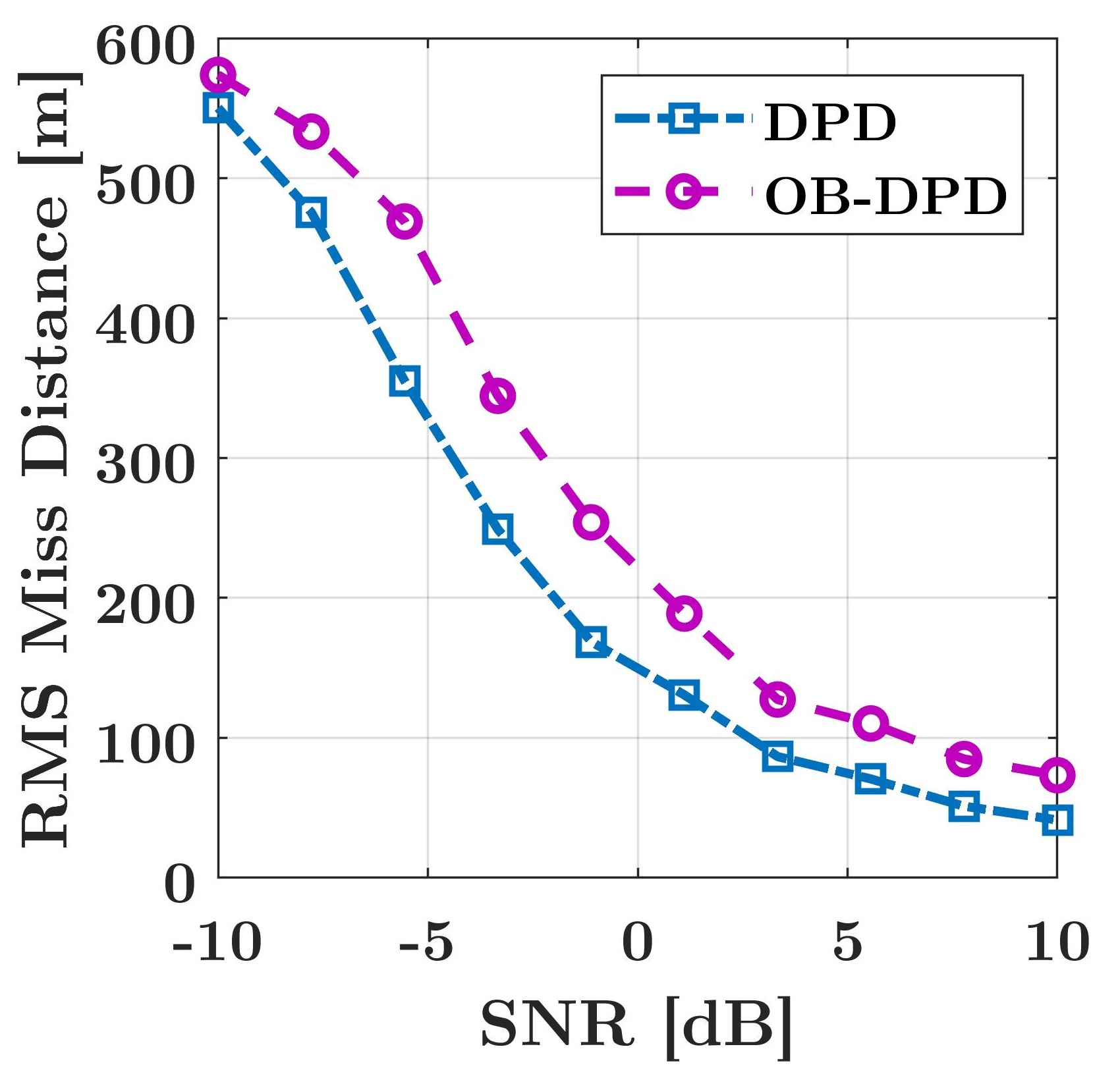}
		\caption{}
		\label{fig:RMS_vs_SNR}
	\end{subfigure}
	~
	\begin{subfigure}[b]{0.23\textwidth}
		\includegraphics[width=\textwidth]{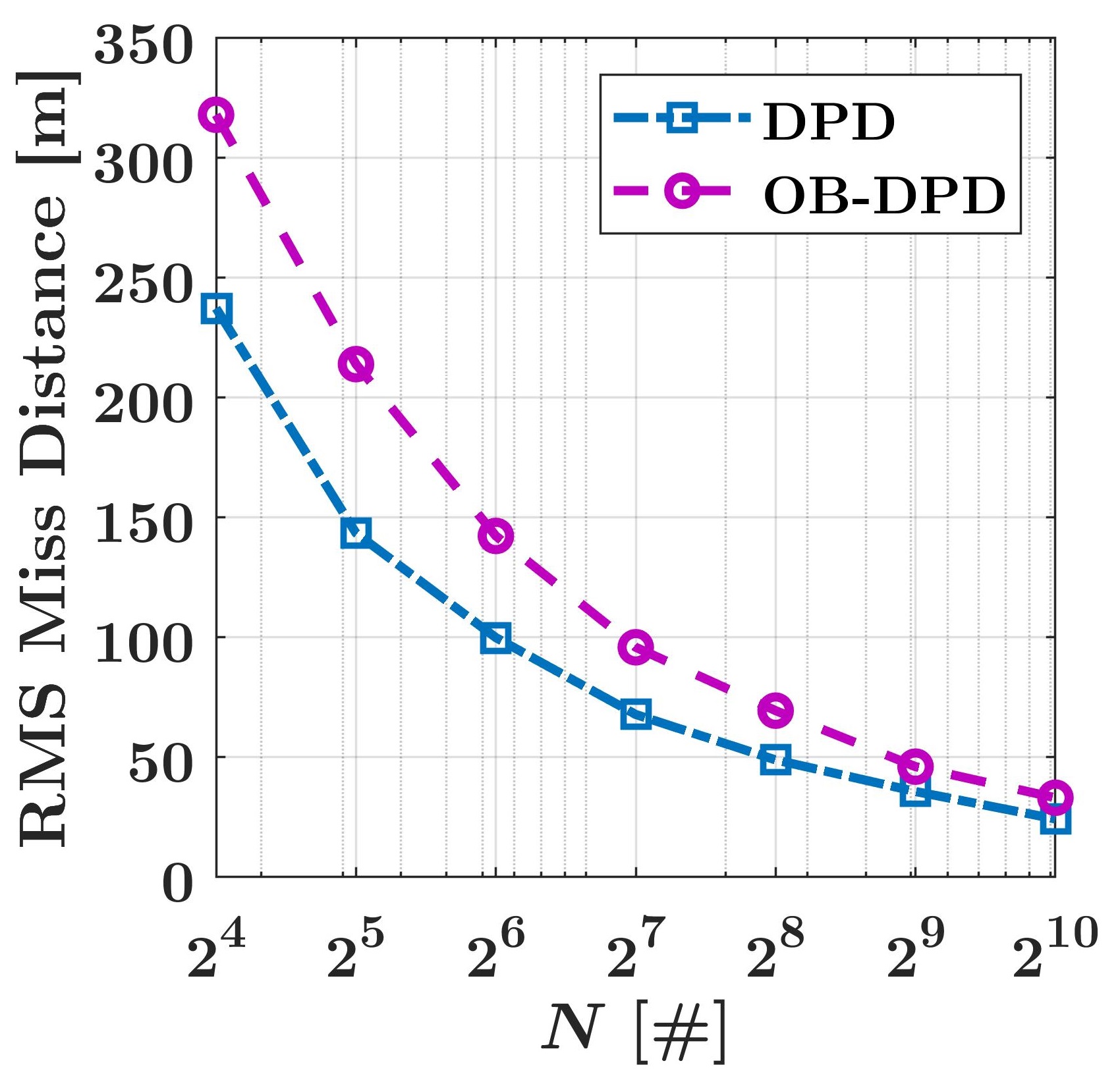}
		\caption{}
		\label{fig:RMS_vs_N}
	\end{subfigure}
	\caption{RMS miss distance (a) vs.\ SNR, $N=32$ (b) vs.\ $N$, $\text{SNR}=0$[dB].}\vspace{-0.5cm}
	\label{fig:RMS}\vspace{-0.1cm}
\end{figure}
%\vspace{-0.3cm}
\section{Conclusion}\label{sec:conclusion}
%\vspace{-0.3cm}
We presented a DPD method for one-bit quantized measurements of narrowband Gaussian signals. Based on the partial preservation of the SOS, we proposed the OB-DPD estimate, and established its asymptotic properties. We showed that the underlying signal model is subject to the same identifiability conditions of its respective ``SOS-equivalent" quantization-free model. Relative to DPD without quantization, the requirements on the communication links between the base stations are significantly alleviated by OB-DPD, as the transmitted number of bits from each base station to the central processor is substantially reduced. Potential directions for future research are the quantification of information loss (in terms of estimation accuracy) due to the coarse one-bit quantization, extensions to multiple emitters, and to the more general case of non-Gaussian signals. Furthermore, an analytical analysis (approximately) predicting the minimal number of bits required from each station for a desired localization accuracy level, would be both instructive from a theoretical point of view, and beneficial from a practical point of view.

{\footnotesize{\section{Acknowledgment}\label{sec:acknowledgment}
\vspace{-0.2cm}
This work was supported, in part, by NSF under Grant No.\ CCF-1717610 and ONR under Grant No.\ N00014-19-1-2665.}}
\vspace{-0.2cm}
%\unappendix
\bibliographystyle{IEEEbib}
%\small{\bibliography{refs}}
\bibliography{refs}

\end{document}